\newtheorem{theorem}{Theorem}
\newtheorem{lemma}[theorem]{Lemma}
\title{Capacity and Performance Analysis of RIS-Assisted Communication Over Rician Fading Channels}
\author{
\IEEEauthorblockN{Chandradeep~Singh, Chia-Hsiang~Lin, and Kamal~Singh}
\vspace*{-0.8cm}
}
\begin{document}
\maketitle
\begin{abstract}
This paper investigates two performance metrics, namely ergodic capacity and symbol error rate, of mmWave communication system assisted by a reconfigurable intelligent surface (RIS). We assume independent and identically distributed (i.i.d.) Rician fadings between user-RIS-Access Point (AP), with RIS surface consisting of passive reflecting elements. First, we derive a new unified closed-form formula for the average symbol error probability of generalised $M$-QAM/$M$-PSK signalling over this mmWave link. We then obtain new closed-form expressions for the ergodic capacity with and without channel state information (CSI) at the AP.
\end{abstract}
\begin{IEEEkeywords}
Ergodic capacity, symbol error probability, mmWave channel, reconfigurable intelligent surface, cascaded Rician fading channel.
\end{IEEEkeywords}
\vspace{-1.1em}
\section{Introduction}
\IEEEPARstart{T}{he} traditional cellular networks, operating at sub 1-GHz and sub 6-GHz, are severely spectrum-limited to meet the high-bandwidth requirements of future wireless data services. For the recently launched 5G and proposed future 6G systems, wireless communication technologies are under development for the Millimeter Wave (mmWave) spectrum (30 GHz-300 GHz) due to large bandwidth availability at these frequencies. To overcome the fundamental challenge of very-high attenuation at mmWave frequencies, the communication is assisted by employing highly directional antennas. Unfortunately, these directional links can be easily blocked by obstacles leading to outage. One feasible solution is to focus the transmitted beam on a reconfigurable intelligent surface (RIS), which then redirects it ‘smartly’ to the intended user~\cite{Wu2020}. The RIS is a promising new technology that can assist communication in challenging 5G/6G wireless scenarios. The key feature of RIS technology is that it consists of nearly passive low cost reflecting elements with reconfigurable parameters to provide virtual line of sight (LoS) path between two communicating nodes to bypass the obstacle~\cite{Basar2019}. We remark that active elements of metasurfaces can also be achieved in recent nanotechnology literature~\cite{Lin2021}. A sufficiently large RIS 
with $N$ reflecting elements can provide a beamforming gain of the order of $\mathcal{O}(N^2)$~\cite{Wu2019}. Most notably, the RIS aided mmWave systems perform better with small number of antennas than existing mmWave wireless technologies such as massive multiple-input and multiple-output systems, multi-antenna amplify-and-forward relay systems etc.  while significantly reducing the implementation costs~\cite{Wu2019,Boulogeorgos2020}.

In this work, we will analyze the performance of an RIS-assisted mmWave communications in terms of the well known performance metrics namely symbol error probability and ergodic capacity~\cite{Boulogeorgos2020},~\cite{Salhab2021}. Ergodic capacity is an important information-theoretic performance metric which provides the highest rate of communication that can be supported by a communication channel with arbitrarily small probability of error. On the other hand, for existing contemporary communication schemes in use, symbol error probability is a more relevant performance measure that indicates the reliability of the communication link. Various channel models have been proposed that encompasses the statistical properties of a typical RIS-assisted mmWave communication link. 
Specifically, cascaded Rician channel models are found to be more suitable for RIS-assisted mmWave communication because both of the AP-RIS and RIS-user mmWave links have a strong/dominant LoS component in addition to non-LoS scattering components~\cite{Gao2021}-\cite{Wu_2020}. The probability density function (PDF) of the cascaded Rician channel is given by an infinite series representation with terms involving modified Bessel functions of the second kind~\cite{Donoughue2012}, which enforces the need for a simple asymptotic approximation for analytical tractability. In~\cite{Yildirim2021}, the RIS-assisted mmWave channel's distribution is approximated by a non-central chi-square distribution to derive unified symbol error probability bounds for the mmWave indoor and outdoor communication scenarios. In~\cite{Han2019}, a simple adaptive discrete phase shift design of the RIS elements is proposed that achieves spectral efficiency close to ergodic capacity in a RIS-assisted mmWave communication link. In another related work, maximal ratio transmission (MRT) over RIS-assisted mmWave communication link is analyzed and the phase shifts are optimized to minimize the outage probability~\cite{Guo2020}. In~\cite{Tao2020}, asymptotic expressions for outage probability and capacity bounds are derived for the RIS-assisted communication link. In~\cite{Jia2020}, authors optimize the phase shifts of the RIS elements to maximize ergodic rate with instantaneous and statistical CSI in RIS-assisted mmWave communication channel. Outage probability and ergodic rate expressions have been derived for single-antenna RIS-assisted communication systems in independent Rayleigh fadings in~\cite{Kudathanthirige2020}. For the less-relevant RIS-assisted mmWave communication assuming independent and identically distributed (i.i.d.) Rayleigh fadings, improved approximation of the overall channel distribution is obtained in~\cite{Yang2020}. Most recent performance results on the RIS-assisted communications subjected to i.i.d. Rician fadings are derived in terms of outage probability, symbol error probability and ergodic capacity \emph{without} CSI in~\cite{Salhab2021}.  

While the aforementioned works study the capacity \emph{without} CSI, no explicit characterization of the ergodic capacity \emph{with} CSI at the transmitter side in RIS-aided communication in independent Rician fadings is available in the current literature. Our main contributions in this paper are two-fold:
\begin{itemize}
\item First, we derive a novel exact closed-form expression of the RIS-assisted mmWave channel's capacity subjected to independent Rician fadings and with full CSI at the transmitter. Additionally, for the \emph{without} CSI case, we obtain a new closed-form capacity expression in terms of Meijer's $G$ function which is computationally convenient compared to the derived capacity result in~\cite{Salhab2021}. We further characterize the asymptotic behaviour of the capacity without CSI in the low-SNR regime which is missing in~\cite{Salhab2021}.
\item We then propose a new unified symbol error probability (SEP) formula for the generalized $M$-QAM/$M$-PSK signalling over the RIS-assisted mmWave communication link subjected to i.i.d. Rician fadings. The proposed 
formula is simple and very compact compared to the SEP expression derived in~\cite{Salhab2021}.
\end{itemize}

The organization of the paper is as follows: RIS-assisted mmWave system model is described
in Section~\ref{sec:sys_model}. The derivations of the capacity and SEP performance measures are 
detailed in Section~\ref{Perfom_analys}. Numerical results are presented and further discussed 
in Section~\ref{sec:ne} to demonstrate the effectiveness of the RIS-assisted communications. Finally, 
conclusions are drawn in Section~\ref{sec:con}.
\section{System Model}\label{sec:sys_model}
We consider a single-antenna based mmWave communication link assisted by an RIS as shown in Fig.~\ref{fig:sys_model}. The RIS consists of a total of $N$ reflecting elements indexed by  $l$. We denote 
by $h_l$ the complex-valued channel gain from AP to the reflecting element $l$ at RIS, and by $g_l$ the complex valued channel gain from the $l$-th RIS element to the user. Phase shift induced by $l$-th RIS element is denoted by $\phi_l$. The channel vector for the AP-RIS link is given by $\bm{h}=\left[h_1,\ldots,h_N\right]^T$. Similarly, the channel vector for the RIS-user link is given by $\bm{g}=\left[g_1,\ldots,g_N\right]^T$. The phase shift matrix of the RIS is given as $\bm{\phi} \triangleq \mbox{diag}\left(\phi_1,\ldots,\phi_N\right)\in \mathbb{C}^{N{\times}N}$. $P_t(\cdot)$ is power scheme employed at the transmitter. Denoting the transmitted symbol by $x$, then the received signal at user is given by
\begin{equation}
y =\sqrt{P_t}\bm{g}^T\bm{\phi}\bm{h}x+n, \label{eq:channelmodel}
\end{equation}
where $n$ is the additive white Gaussian noise (AWGN) $n \sim \mathcal{CN}(0,N_0)$. Denoting the complex channel coefficient $h_l$ as $h_l := \alpha_le^{\theta_l}$ where $\alpha_l$ and $\theta_l$ are the amplitude and phase shift respectively. Similarly, we express the channel coefficient $g_l$ as $g_l := \beta_le^{\theta'_l}$, where $\beta_l$ is the amplitude and $\theta'_l$ is the phase shift. Further, we assume that the envelopes of first hop channels ${\alpha_1,\ldots,\alpha_N}$ are i.i.d. Rician random variable with shape parameter $K_1$ and scale parameter $\Omega_1$. Likewise, envelopes of the second hop channels ${\beta_1,\ldots,\beta_N}$ are assumed to be i.i.d. Rician random variables with shape parameter $K_2$ and scale parameter $\Omega_2$. We consider that the channel vectors $\bm{h}$ and $\bm{g}$ are statistically independent. Let $v_i^2$ denote the power in the LoS component and $2\sigma_i^2$ denote the power in the non-line-of-sight (NLoS) component. The shape and scale parameter are denoted by $K_i=\frac{v_i^2}{2\sigma_i^2}$ and $\Omega_i=v_i^2+2\sigma_i^2$ respectively.
\begin{figure}[h]
\begin{centering}
\includegraphics[scale=0.4]{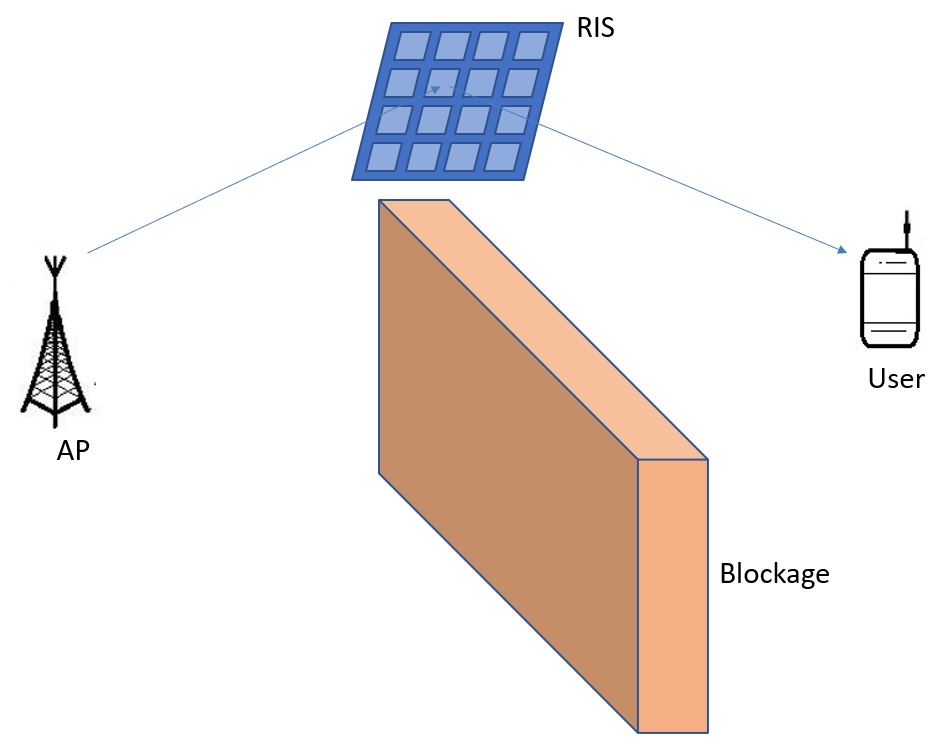}
\caption{Illustration of RIS-Assisted Communications.}
\label{fig:sys_model}
\end{centering}
\vspace{-0.65em}
\end{figure}

When transmitter expends power at a constant level $P_{\mathrm{avg}}$, according to~\cite{Basar2019}, the maximum end-to-end SNR is given by 
\begin{equation}\label{eq:snr}
\gamma =\overline{\gamma}\left(\sum_{l=1}^{N}\alpha_l\beta_l\right)^2\\ 
\phantom{x} \mbox{where }\phantom{x}\overline{\gamma}=\frac{P_{\mathrm{avg}}}{N_0}, \nonumber
\end{equation}
with the implicit assumption of channel information available at the RIS to provide optimal beamforming gain by optimizing its phase shift matrix $\bm{\phi}$. Let $\xi=\sum_{l=1}^{N}\xi_l$ where $\xi_l=\alpha_l\beta_l$. The PDF of $\xi_l$ can be obtained as~\cite[Eq.~(24)]{Donoughue2012}
\begin{align}\label{eq:pdf123}
&f_{\xi_l}\left(y\right) = \nonumber\\
&\sum_{j=1}^{\infty}\sum_{i=1}^{\infty}\frac{y^{i+j+1}\left(4K_2^iK_1^j(\Omega_1\Omega_2)^{\frac{(i+j+2)}{2}}\right)K_{j-i}(2y\sqrt{\Omega_1\Omega_2})}{(i!)^2(j!)^2\exp(K_1K_2)},
\end{align}
where $K_{\nu}(\cdot)$ is the modified Bessel function of the second kind and order $\nu$. The PDF in~\eqref{eq:pdf123} is an infinite series that has to be approximated for simulations and analysis purpose. For practical applications,  we use the first term of the Laguerre expansion to approximate the PDF of the sum of $N$ i.i.d. non-negative random variables \cite[Sec. 2.2.2]{Primak2004}. The mean and variance of the cascaded Rician random variable $\xi_l$ are computed as
\vspace{-0.5em}
\begin{flalign}\label{eq:mean} 
&\phantom{xxxxxxxxxxxxxxx}\mathbbm{E}[\xi_l] = \mathbbm{E}[\alpha_l]\mathbbm{E}[\beta_l],&\\[0.35em]
&\text{where}\phantom{xxxxxxx}\mathbbm{E}[\alpha_l] = \frac{1}{2}\sqrt{\frac{\pi \,  \Omega_1}{K_1+1}}L_{1/2}(-K_1),&\nonumber\\
&\text{and}\phantom{xxxxxxxxx}\mathbbm{E}[\beta_l] = \frac{1}{2}\sqrt{\frac{\pi \, \Omega_2}{K_2+1}}L_{1/2}(-K_2).&\nonumber
\end{flalign}
Here, $L_{1/2}(\cdot)$ denotes the Laguerre polynomial, i.e.,  $L_{1/2}(x)=e^{x/2}\left[(1-x)I_0\left(\frac{-x}{2}\right)-xI_1\left(\frac{-x}{2}\right)\right]$,
where $I_{\nu}(\cdot)$ is the modified Bessel function of the first kind and order $\nu$. Further simplification of~\eqref{eq:mean} gives
\begin{align}
\mathbbm{E}[\xi_l] &= \frac{\pi e^{-\frac{(K_1+K_2)}{2}}}{4}\sqrt{\frac{\Omega_1\Omega_2}{(K_1+1)(K_2+1)}} \nonumber \\
&\phantom{x}\,\times \left[(K_1+1)I_0\left(\frac{K_1}{2}\right)+K_1I_1\left(\frac{K_1}{2}\right)\right] \nonumber \\
&\phantom{x}\,\times \left[(K_2+1)I_0\left(\frac{K_2}{2}\right)+K_2I_1\left(\frac{K_2}{2}\right)\right].
\end{align}
Notice that $\mathbbm{E}[ \xi_l^2 ] =\mathbbm{E}[\alpha_l^2]\mathbbm{E}[\beta_l^2]=\Omega_1\Omega_2$. We now compute the variance as follows:
\begin{align} \label{eq:var}
\mathrm{Var}\left(\xi_l\right) &=\mathbbm{E}[\xi_l^2]-\mathbbm{E}^2[\xi_l]=\Omega_1\Omega_2-\mathbbm{E}^2[\xi_l].
\end{align}

\begin{lemma}
The PDF and cumulative density function (CDF) of the end-to-end SNR are respectively given by 
\begin{equation}\label{eq:pdf_snr}
f_{\gamma}\left(\gamma\right) \simeq \frac{\gamma^{\frac{a-1}{2}}\exp\left(-\frac{\sqrt{\gamma}}{b\sqrt{\overline{\gamma}}}\right)}{2b^{a+1}\Gamma\left(a+1\right)\overline{\gamma}^{\frac{a+1}{2}}}, \mbox{ and }
\end{equation}
\begin{equation}\label{eq:pdf_snr}
F_{\gamma}\left(\gamma\right) \simeq \frac{\gamma\left(a+1,\frac{\sqrt{\gamma}}{b\sqrt{\overline{\gamma}}}\right)}{\Gamma\left(a+1\right)},
\end{equation}
where $\gamma(\cdot,\cdot)$ is the lower incomplete Gamma function given in \cite[Eq.~8.350.1]{Gradshteyn2000}, $\Gamma(\cdot)$ is the Gamma function, 
$a=\frac{\mathbbm{E}^2[\xi_l]}{\mathrm{Var}(\xi)}-1$ and $b=\frac{\mathrm{Var}(\xi)}{\mathbbm{E}[\xi_l]}$.
\end{lemma}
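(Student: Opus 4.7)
The plan is to first approximate the PDF of $\xi=\sum_{l=1}^{N}\xi_l$ by a Gamma density using the moment-matched first-term Laguerre expansion cited from~\cite{Primak2004}, and then push this approximation through the deterministic monotone map $\gamma=\overline{\gamma}\,\xi^{2}$ to obtain $f_{\gamma}$ and $F_{\gamma}$.

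\textbf{Step 1 (Gamma approximation for $\xi$).} Because $\{\xi_l\}_{l=1}^{N}$ are i.i.d.\ non-negative with finite first and second moments, the first-term Laguerre expansion yields
\begin{equation*}
f_{\xi}(y)\;\simeq\;\frac{y^{a}\exp(-y/b)}{b^{a+1}\,\Gamma(a+1)},\qquad y\ge 0,
\end{equation*}
where the shape $a+1$ and scale $b$ are obtained by moment matching, i.e.\ $a+1=(\mathbb{E}[\xi])^{2}/\mathrm{Var}(\xi)$ and $b=\mathrm{Var}(\xi)/\mathbb{E}[\xi]$. Using $\mathbb{E}[\xi]=N\mathbb{E}[\xi_l]$ and $\mathrm{Var}(\xi)=N\mathrm{Var}(\xi_l)$ together with~\eqref{eq:mean} and~\eqref{eq:var}, these collapse to the expressions for $a$ and $b$ stated in the lemma (interpreting the subscripts so that the matching is consistent with the moments of the sum).

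\textbf{Step 2 (change of variables $\xi\mapsto\gamma$).} The mapping $\gamma=\overline{\gamma}\,\xi^{2}$ is strictly increasing on $[0,\infty)$, with inverse $\xi=\sqrt{\gamma/\overline{\gamma}}$ and Jacobian $|d\xi/d\gamma|=1/(2\sqrt{\overline{\gamma}\,\gamma})$. Consequently,
\begin{equation*}
f_{\gamma}(\gamma)\;=\;f_{\xi}\!\bigl(\sqrt{\gamma/\overline{\gamma}}\bigr)\,\frac{1}{2\sqrt{\overline{\gamma}\,\gamma}},
\end{equation*}
and substituting the Gamma density from Step~1 gives the stated form after combining the $\sqrt{\gamma}$ from the Jacobian with $(\sqrt{\gamma/\overline{\gamma}})^{a}$ to produce $\gamma^{(a-1)/2}/\overline{\gamma}^{(a+1)/2}$. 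For the CDF I would use $F_{\gamma}(\gamma)=\Pr\bigl(\xi\le\sqrt{\gamma/\overline{\gamma}}\bigr)=F_{\xi}\bigl(\sqrt{\gamma/\overline{\gamma}}\bigr)$ and invoke the identity $\int_{0}^{x}t^{a}e^{-t/b}\,dt=b^{a+1}\gamma(a+1,x/b)$ from~\cite[Eq.~8.350.1]{Gradshteyn2000} to arrive at the regularized lower-incomplete-Gamma form.

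\textbf{Main obstacle.} No step is conceptually deep; the only real difficulty is algebraic bookkeeping. The closed form for $\mathbb{E}[\xi_l]$ in~\eqref{eq:mean} is unwieldy, and care is required to keep track of the $N$ factors when translating from the moments of a single hop $\xi_l$ to those of the sum $\xi$ that actually enter the Gamma matching. Beyond that, the $\simeq$ in the statement reflects solely the truncation of the Laguerre series at its leading term, whereas the change of variables in Step~2 is exact, so no additional approximation error is introduced in passing from $f_{\xi}$ to $f_{\gamma}$ and $F_{\gamma}$.
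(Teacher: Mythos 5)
Your proposal is correct and follows essentially the same route as the paper: a first-term Laguerre (moment-matched Gamma) approximation of $f_{\xi}$, followed by the exact change of variables $\gamma=\overline{\gamma}\,\xi^{2}$ for the PDF and CDF. Your remark about reading the subscripts in $a$ and $b$ so that the matching uses the moments of the sum $\xi$ (rather than of a single $\xi_l$) is the right interpretation of the lemma's statement.
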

\begin{proof}
As $\xi$ is the sum of i.i.d non-negative random variables, therefore, by using only the first term of Laguerre expansion given in~\cite[Sec. 2.2.2]{Primak2004}, we can approximate the PDF of $\xi$ as 
\begin{equation}\label{eq:pdf_si}
f_{\xi}\left(y\right) \simeq \frac{y^a}{b^{a+1}\Gamma\left(a+1\right)}\exp\left(-\frac{y}{b}\right).
\end{equation}
$\mathbbm{E}[\xi] =N \cdot \mathbbm{E}[\xi_l]$ and $\mathrm{Var}(\xi)=N \cdot \mathrm{Var}(\xi_l)$. The results then simply follows from the relation $\gamma=\overline{\gamma}\, \xi^2$.
\end{proof}
Note that for the distribution $f_{\xi}\left(y\right)$,~\eqref{eq:pdf_si} is significantly more convenient rather than utilizing the infinite series expansion given in~\eqref{eq:pdf123}. In the next section, we will analyze the performance of the RIS-assisted mmWave communication utilizing the approximate distribution derived in~\eqref{eq:pdf_si}.

\section{Performance Analysis: Symbol Error Probability \& Ergodic Capacity}\label{Perfom_analys}
\subsection{Average Symbol Error Probability}
For generalized $M$-QAM/$M$-PSK signalling schemes, the unified average symbol error probability (ASEP) expression is of the form $ASEP = \mathbbm{E} [p \, Q(\sqrt{ 2 q \gamma})] $ where $Q(\cdot)$ is the Gaussian Q-function, and $p$ and $q$ are modulation-specific constants; for more details, we refer the interested reader to~\cite{Alouini1999}. This expectation can be written using the CDF as follows~\cite{McKay2007}:
\begin{equation}\label{eq:err_prob}
ASEP=\frac{p\sqrt{q}}{2\sqrt{\pi}}\int_{0}^{\infty}\frac{\exp(-q\gamma)}{\sqrt{\gamma}}F_{\gamma}\left(\gamma\right)d\gamma.
\end{equation}

\begin{theorem}
For the RIS aided communication, the unified ASEP formula is given by
\begin{equation}\label{eq:err_probclos}
ASEP=\frac{2^{(a-1)}p}{\pi\Gamma(a+1)} G_{3,4}^{2,3} \left(\frac{1}{4q\overline{\gamma}b^2}\bigg|_{\frac{a+1}{2},\frac{a+2}{2},0,\frac{1}{2}}^{\frac{1}{2},\frac{1}{2},1}\right),
\end{equation}
where $G_{p,q}^{m,n}(\cdot)$ is Meijer's G function as given in \cite[Sec. 9.3]{Gradshteyn2000}. 
\end{theorem}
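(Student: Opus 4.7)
The plan is to substitute the CDF from the preceding lemma into the ASEP formula~\eqref{eq:err_prob} and evaluate the resulting single integral through Meijer $G$-function manipulations. After plugging in $F_\gamma$ and changing variables via $t=\sqrt{\gamma}$ (so that $d\gamma=2t\,dt$ cancels the $1/\sqrt{\gamma}$ and the constant $1/2$), I would obtain
\begin{equation*}
ASEP = \frac{p\sqrt{q}}{\sqrt{\pi}\,\Gamma(a+1)}\int_{0}^{\infty} e^{-qt^{2}}\,\gamma\left(a+1,\tfrac{t}{b\sqrt{\overline{\gamma}}}\right)dt,
\end{equation*}
and then invoke the Meijer-$G$ representation $\gamma(a+1,z)=G_{1,2}^{1,1}\left(z\big|_{a+1,\,0}^{\,1}\right)$.

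The key difficulty is that $e^{-qt^{2}}$ depends on $t^{2}$ while the $G$-function depends on $t$, so they cannot be merged under a single Mellin--Barnes integral without further work. To reconcile this, I would apply the Gauss--Legendre multiplication theorem for Meijer $G$-functions with $k=2$. Using $\Delta(2,x)=\{x/2,(x+1)/2\}$, the inflation rule yields
\begin{equation*}
G_{1,2}^{1,1}\left(ct\,\big|_{a+1,\,0}^{\,1}\right) \,=\, \frac{2^{a}}{\sqrt{\pi}}\,G_{2,4}^{2,2}\left(\tfrac{c^{2}t^{2}}{4}\,\Big|_{(a+1)/2,\,(a+2)/2,\,0,\,1/2}^{1/2,\,1}\right),
\end{equation*}
with $c=1/(b\sqrt{\overline{\gamma}})$. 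The half-integer shifts $(a+1)/2$, $(a+2)/2$, $1/2$ appearing in the statement are generated precisely by $\Delta(2,\cdot)$, and the inner argument $c^{2}/4$ already explains the factor $1/(4b^{2}\overline{\gamma})$ in the final answer.

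Both factors now carry the argument scale $t^{2}$, so substituting $y=t^{2}$ reduces the integral to a standard Laplace-type form. Applying \cite[Eq.~7.813.1]{Gradshteyn2000},
\begin{equation*}
\int_{0}^{\infty} y^{\rho-1}e^{-\sigma y}G_{p,q}^{m,n}\left(\omega y\,\big|_{b_q}^{a_p}\right)dy = \sigma^{-\rho}G_{p+1,q}^{m,n+1}\left(\tfrac{\omega}{\sigma}\,\Big|_{b_q}^{1-\rho,\,a_p}\right),
\end{equation*}
with $\rho=1/2$, $\sigma=q$, $\omega=c^{2}/4$, prepends $1/2$ to the top parameter list (yielding $\{1/2,1/2,1\}$), lifts the orders to $G_{3,4}^{2,3}$, and produces the argument $1/(4q\overline{\gamma}b^{2})$ exactly as claimed. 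Multiplying the accumulated prefactors $\tfrac{p\sqrt{q}}{\sqrt{\pi}\,\Gamma(a+1)}\cdot\tfrac{2^{a}}{\sqrt{\pi}}\cdot\tfrac{1}{2\sqrt{q}}$ (the last piece being the Jacobian $1/2$ from $dt=dy/(2\sqrt{y})$ times $\sigma^{-\rho}=q^{-1/2}$) collapses to the claimed constant $\tfrac{2^{a-1}p}{\pi\,\Gamma(a+1)}$.

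The main obstacle I expect is bookkeeping the multiplicative constant in the multiplication theorem, whose prefactor has the form $k^{1/2+\sum_j b_j-\sum_i a_i}/(2\pi)^{(k-1)(q-p)/2}$; a stray $\sqrt{2}$ or $\sqrt{\pi}$ here would ruin the final coefficient $2^{a-1}$. I would pin the exponent down by checking consistency on the trivial case $e^{-z}=G_{0,1}^{1,0}(z|_{0})$, for which the $k=2$ duplication must reproduce $e^{-z}$ against the known closed form $G_{0,2}^{2,0}(z^{2}/4|_{0,1/2})=\sqrt{\pi}\,e^{-z}$; this fixes the exponent as $1/2+\sum b_j-\sum a_i$ rather than, say, $1+\sum b_j-\sum a_i$. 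Once this prefactor is locked in, every remaining step---the substitution $t=\sqrt{\gamma}$, the Meijer representation of the incomplete gamma, and the Laplace-type integration formula---reduces to routine symbolic calculation.
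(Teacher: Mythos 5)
Your proposal is correct and follows essentially the same route as the paper: substitute the CDF, write the lower incomplete gamma function as $G_{1,2}^{1,1}$, and evaluate the resulting exponentially weighted integral of a Meijer $G$-function of $\sqrt{\gamma}$. The only difference is presentational --- you unpack the single compound identity the paper invokes (Prudnikov 2.24.3.1 with $l/k=1/2$) into its two constituents, the $k=2$ multiplication theorem followed by the elementary Laplace-transform formula, and your bookkeeping of the argument $1/(4q\overline{\gamma}b^2)$, the parameter lists, and the constant $2^{a-1}p/(\pi\Gamma(a+1))$ all check out.
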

\begin{proof}
Eq. \eqref{eq:err_prob} can be written as 
\begin{equation}\label{eq:err_prob1}
ASEP=\frac{p\sqrt{q}}{2\sqrt{\pi}\Gamma\left(a+1\right)}\int_{0}^{\infty}\frac{e^{-q\gamma}}{\sqrt{\gamma}}\gamma\left(a+1,\frac{\sqrt{\gamma}}{b\sqrt{\overline{\gamma}}}\right)d\gamma.
\end{equation}
As given in \cite[Sec. 8.4.16]{Prudnikov1998}, the lower incomplete Gamma function can be expressed as $\gamma(v,x)=G_{1,2}^{1,1}\left( x\big|_{v,0}^{1}\right)$. Therefore,
\begin{equation}\label{eq:err_prob2}
ASEP=\frac{p\sqrt{q}}{2\sqrt{\pi}\Gamma\left(a+1\right)}\int_{0}^{\infty}\frac{e^{-q\gamma}}{\sqrt{\gamma}}G_{1,2}^{1,1}\left( \frac{\sqrt{\gamma}}{b\sqrt{\overline{\gamma}}}\bigg|_{a+1,0}^{1}\right)d\gamma.
\end{equation}
Using identity given in~\cite[Sec. 2.24.3.1]{Prudnikov1998} as \eqref{eq:identity}  at top of next page,
\begin{figure*}[t!]
\normalsize
\noindent\rule{18.2cm}{0.2pt}
\begin{align}\label{eq:identity}
\int_{0}^{\infty}x^{\alpha-1} e^{-\sigma x} &G_{p,q}^{m,n}\left( \omega x^{l/k}\bigg|_{b_1,\ldots,b_m,\ldots,b_q}^{a_1,\ldots,a_n,\ldots,a_p}\right)dx
\nonumber \\
&=\frac{k^{\mu}l^{\alpha-1}\sigma^{-\alpha}}{(2\pi)^{\frac{l-1}{2}+c^*(k-1)}} 
G_{kp+l,kq}^{km,kn+l}\left( \frac{\omega^k l^l}{\sigma^l k^{k(q-p)}}\bigg|_{\frac{b_1}{k},\ldots, \frac{k+b_1-1}{k},\ldots,\frac{b_m}{k},\ldots, \frac{k+b_m-1}{k},\ldots,\frac{b_q}{k},\ldots, \frac{k+b_q-1}{k}}^{\frac{1-\alpha}{l},\ldots,\frac{l-\alpha}{l},\frac{a_1}{k},\ldots, \frac{k+a_1-1}{k},\ldots,\frac{a_n}{k},\ldots, \frac{k+a_n-1}{k},\ldots,\frac{a_p}{k},\ldots, \frac{k+a_p-1}{k}}\right),
\end{align}
\mbox{where} $\phantom{x}\mu \,\,=\,\,\sum_{j=1}^q  \,b_j-\sum_{j=1}^p \,a_j+\frac{p-q}{2}+1$ $\phantom{x}$    \mbox{and} $\phantom{x}$ $c^* \,\,= \,\,m+n-\frac{p+q}{2}$.\hfill
\vspace{-0.25em}
\noindent\rule{18.2cm}{0.2pt}
\end{figure*}
we get the desired result. Therefore, we have completed the proof.
\end{proof}
Note that the derived expression for ASEP is compact  and computationally very convenient compared to the ASEP expression derived in~\cite{Salhab2021} as the Meijer's $G$ function is available in typical numerical computing environments (e.g., MATLAB, Mathematica).
\subsection{Ergodic Capacity}
\textbf{Without CSI:} When the Access Point (AP) transmits at fixed power level, the channel's ergodic capacity is computed as follows~\cite[Eq.~(8)]{Goldsmith1997}:
\begin{equation}\label{eq:erg_cap}
 \widehat{C} =\frac{1}{\ln2}\int_{0}^{\infty}\ln(1+\gamma)f_{\gamma}\left(\gamma\right)d\gamma.
\end{equation}
\begin{theorem}\label{thm:no_csi}
For the RIS aided communication channel, the ergodic capacity without CSI is given by
\begin{equation}\label{eq:nocsi_cap}
 \widehat{C} =\frac{2^{a}}{\ln2\sqrt{\pi}\Gamma(a+1)} G_{4,2}^{1,4} \left(4\overline{\gamma}b^2\bigg|_{1,0}^{\frac{-a}{2},\frac{-a+1}{2},1,1}\right).
\end{equation} 
\end{theorem}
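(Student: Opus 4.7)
The plan is to substitute the approximate PDF of $\gamma$ from Lemma~1 into~\eqref{eq:erg_cap} and reduce the resulting integral to the same Meijer-$G$ identity~\eqref{eq:identity} that was used in the proof of Theorem~1. Plugging in $f_\gamma$ leaves
\begin{equation*}
\widehat{C} \simeq \frac{1}{2\ln 2 \cdot b^{a+1}\Gamma(a+1)\overline{\gamma}^{(a+1)/2}} \int_0^\infty \ln(1+\gamma)\,\gamma^{(a-1)/2}\exp\!\left(-\tfrac{\sqrt{\gamma}}{b\sqrt{\overline{\gamma}}}\right) d\gamma,
\end{equation*}
so the entire task reduces to evaluating this single integral in closed form.

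The square root inside the exponential is what prevents a direct application of~\eqref{eq:identity}, which expects a linear exponent in the integration variable. I would eliminate it with the change of variable $u=\sqrt{\gamma}$ (so $d\gamma=2u\,du$), which turns the integral into
\begin{equation*}
2\int_0^\infty u^{a}\,\ln(1+u^2)\,e^{-u/(b\sqrt{\overline{\gamma}})}\,du.
\end{equation*}
I would then invoke the standard Meijer-$G$ representation of the logarithm, $\ln(1+x) = G^{1,2}_{2,2}\!\left(x \big|_{1,0}^{1,1}\right)$, applied with $x=u^2$. This casts the integrand in the form $u^{\alpha-1}e^{-\sigma u}\,G^{m,n}_{p,q}(\omega u^{l/k}|\cdot)$, which is precisely what~\eqref{eq:identity} handles, with the parameter choices $\alpha=a+1$, $\sigma=1/(b\sqrt{\overline{\gamma}})$, $\omega=1$, $l=2$, $k=1$, $(m,n,p,q)=(1,2,2,2)$, $a_1=a_2=1$, $b_1=1$, $b_2=0$.

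Applying~\eqref{eq:identity} with these parameters produces a Meijer $G$ of dimensions $G^{km,kn+l}_{kp+l,kq} = G^{1,4}_{4,2}$, whose upper parameters $\{(1-\alpha)/l,(2-\alpha)/l,a_1,a_2\} = \{-a/2,(1-a)/2,1,1\}$, lower parameters $\{b_1,b_2\}=\{1,0\}$, and argument $\omega^k l^l/(\sigma^l k^{k(q-p)}) = 4b^2\overline{\gamma}$ all match exactly the Meijer $G$ appearing in~\eqref{eq:nocsi_cap}. The main obstacle is then the careful bookkeeping of the prefactor $k^\mu l^{\alpha-1}\sigma^{-\alpha}/(2\pi)^{(l-1)/2+c^*(k-1)}$: one has to evaluate $\mu = \sum b_j - \sum a_j + (p-q)/2 + 1 = 0$ and $c^* = m+n-(p+q)/2 = 1$, and then verify that the $b^{a+1}\overline{\gamma}^{(a+1)/2}$ coming out of $\sigma^{-\alpha}$ cancels the identical factor in the PDF normalization so that the only constants left standing are those in the stated prefactor $2^a/(\sqrt{\pi}\,\ln 2\,\Gamma(a+1))$.
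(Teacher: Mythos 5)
Your proposal follows essentially the same route as the paper's proof: substitute the Lemma~1 PDF into~\eqref{eq:erg_cap}, remove the $\sqrt{\gamma}$ in the exponential via the substitution $\gamma=t^2$, represent $\ln(1+\gamma)$ as $G_{2,2}^{1,2}\left(\gamma\big|_{1,0}^{1,1}\right)$, and apply the integral identity~\eqref{eq:identity} with exactly the parameter choices $(\alpha,\sigma,l,k)=(a+1,\,1/(b\sqrt{\overline{\gamma}}),\,2,\,1)$ that the paper uses (the only cosmetic difference is that you change variables before inserting the Meijer-$G$ representation of the logarithm, while the paper does the reverse). The argument and the resulting $G_{4,2}^{1,4}$ parameters match the paper, so the proposal is correct and equivalent to the published proof.
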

\begin{proof}
As given in \cite[Sec. 8.4.6]{Prudnikov1998}, $\ln(1+x)$ can be expressed as $\ln(1+x)=G_{2,2}^{1,2}\left( x\big|_{1,0}^{1,1}\right)$. Therefore, Eq.~\eqref{eq:erg_cap} can be written as
\begin{eqnarray} 
\begin{split}
 \widehat{C}  &=\frac{1}{2\ln2(b^{a+1})\Gamma\left(a+1\right)\overline{\gamma}^{\frac{a+1}{2}}} \times\\
&\phantom{xxxxxxxxxx}\int_{0}^{\infty}\gamma^{\frac{a-1}{2}}\exp\left(-\frac{\sqrt{\gamma}}{b\sqrt{\overline{\gamma}}}\right)G_{2,2}^{1,2}\left( \gamma\big|_{1,0}^{1,1}\right)d\gamma.  \nonumber
\end{split}
\end{eqnarray}
Changing the variable of integration as $\gamma=t^2$, we get
\begin{eqnarray} 
\begin{split}
 \widehat{C}  &=\frac{1}{\ln2(b^{a+1})\Gamma\left(a+1\right)\overline{\gamma}^{\frac{a+1}{2}}} \\
& \int_{0}^{\infty}t^a\exp\left(-\frac{t}{b\sqrt{\overline{\gamma}}}\right)G_{2,2}^{1,2}\left( t^2\big|_{1,0}^{1,1}\right)dt.  \nonumber
\end{split}
\end{eqnarray}

\noindent Using the identity given in Eq.~\eqref{eq:identity}, we get the desired capacity expression.
\end{proof}

We remark that the low-SNR regime can be very relevant in mmWave communications as SNR per degree of freedom can become low due to very large bandwidth available at mmWave frequencies~\cite{Verdu2004}. We can also analyze the asymptotic capacity behaviour in the low-SNR regime, i.e., as $\overline{\gamma}\to 0$. As $\overline{\gamma}\to 0$, the series expansion of the Meijer's $G$ function in~\eqref{eq:nocsi_cap} is dominated by the first term as follows~\cite{Gradshteyn2000}:
\begin{equation}
G_{4,2}^{1,4} \left(4\overline{\gamma}b^2\bigg|_{1,0}^{\frac{-a}{2},\frac{-a+1}{2},1,1}\right) \approx 4b^2 \Gamma\left(\frac{a+3}{2}\right)\Gamma\left(\frac{a+4}{2}\right)\overline{\gamma}.
\end{equation}
Hence, the asymptotic low-SNR capacity expression without CSI is given by
\begin{equation}\label{eq:nocsi_asym}
 \widehat{C}  \approx \frac{2^{a+2}b^2\Gamma(\frac{a+3}{2})\Gamma(\frac{a+4}{2})\overline{\gamma}}{\ln2\sqrt{\pi}\Gamma(a+1)}.
\end{equation}
It shows that in the low-SNR regime, the ergodic capacity without CSI varies linearly with the average SNR $\overline{\gamma}$. 

\textbf{With CSI:} The optimal power strategy with instantaneous CSI, say $P_t(\cdot)$, is the well-known water-filling scheme given as $\frac{P_t(\gamma)}{P_{\mathrm{avg}}} = \frac{1}{\gamma_{0}}-\frac{1}{\gamma}$ for $\gamma>\gamma_{0}$ and zero otherwise~\cite{Goldsmith1997}. The cutoff $\gamma_0$ is determined from the average power constraint $\int_{\gamma_0}^{\infty}\left(\frac{1}{\gamma_{0}}-\frac{1}{\gamma}\right)f_{\gamma}\left(\gamma\right)d\gamma=1$.
\begin{theorem}\label{thm:csi}
For the RIS aided communication channel, the ergodic capacity with instantaneous CSI is given by
\begin{equation}\label{eq:csi_cap}
 C =\frac{2^{a}}{\ln2\sqrt{\pi}\Gamma(a+1)} G_{4,2}^{0,4} \left(\frac{4b^2}{\gamma_0}\bigg|_{0,0}^{\frac{-a}{2},\frac{-a+1}{2},1,1}\right).
\end{equation} 
\end{theorem}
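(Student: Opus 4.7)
My plan is to mirror the proof of Theorem~\ref{thm:no_csi} as closely as possible, the crucial twist being to find a Meijer $G$ representation of the log that also absorbs the water-filling cutoff. Starting from $C = \frac{1}{\ln 2}\int_{\gamma_0}^{\infty} \ln(\gamma/\gamma_0)\, f_{\gamma}(\gamma)\, d\gamma$, the obstruction to reusing the earlier argument verbatim is that the integration runs over $[\gamma_0,\infty)$, whereas the Prudnikov identity~\eqref{eq:identity} requires an integral over $(0,\infty)$.

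To sidestep this I would invoke the Mellin--Barnes identity $G_{2,2}^{0,2}\!\left(x\big|_{0,0}^{1,1}\right) = \ln(x)\,\mathbbm{1}(x\ge 1)$, which is the natural counterpart of the $\ln(1+x) = G_{2,2}^{1,2}(x|_{1,0}^{1,1})$ representation used in the no-CSI proof; lowering $m$ from $1$ to $0$ (and correspondingly shifting the $b$-parameters from $\{1,0\}$ to $\{0,0\}$) is exactly what switches on the threshold behaviour. This lets me extend the lower limit back to $0$ while pushing $\gamma_0$ inside the argument:
\[
C = \frac{1}{\ln 2}\int_{0}^{\infty} G_{2,2}^{0,2}\!\left(\tfrac{\gamma}{\gamma_0}\bigg|_{0,0}^{1,1}\right) f_{\gamma}(\gamma)\, d\gamma.
\]

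From here the route is identical to Theorem~\ref{thm:no_csi}: I would substitute the PDF from Lemma~1, make the change of variable $\gamma = t^2$ to linearise the exponential factor $\exp(-\sqrt{\gamma}/(b\sqrt{\overline{\gamma}}))$, and then invoke~\eqref{eq:identity} with $\alpha = a+1$, $\sigma = 1/(b\sqrt{\overline{\gamma}})$, $\omega = 1/\gamma_0$, $k = 1$, $l = 2$, and source function $G_{2,2}^{0,2}$. The resulting function has order $(kp+l,\,kq) = (4,2)$ and type $(km,\,kn+l) = (0,4)$, and the recipe for the parameter lists produces precisely $\{\tfrac{-a}{2},\tfrac{-a+1}{2},1,1\}$ over $\{0,0\}$, matching~\eqref{eq:csi_cap}. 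The constant prefactor then collapses after cancellation of $b^{a+1}\overline{\gamma}^{(a+1)/2}$ against $\sigma^{-(a+1)} = (b\sqrt{\overline{\gamma}})^{a+1}$, leaving the scalar $2^{a}/[\ln 2\sqrt{\pi}\,\Gamma(a+1)]$.

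The hard part is really the first step: verifying $G_{2,2}^{0,2}(x|_{0,0}^{1,1}) = \ln(x)\,\mathbbm{1}(x\ge 1)$. The Mellin--Barnes integrand reduces to $z^{s}/s^{2}$, and one needs to check that closing the contour to the left for $|z|>1$ encloses only the double pole at $s=0$ (whose residue is $\ln z$), whereas closing to the right for $|z|<1$ picks up no poles and returns $0$; this dichotomy is precisely what encodes the indicator and ultimately the water-filling cutoff. Once that representation is in place, the remainder of the derivation is the same bookkeeping already carried out for Theorem~\ref{thm:no_csi}.
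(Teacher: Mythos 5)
Your proposal is correct and follows essentially the same route as the paper: the paper likewise writes the water-filled capacity as an integral of $\ln(x)H(x-1)=G_{2,2}^{0,2}\left( x\big|_{0,0}^{1,1}\right)$ against the SNR density and then reuses the Theorem~\ref{thm:no_csi} machinery (substitute the PDF, set $\gamma=t^2$, apply~\eqref{eq:identity}). The only cosmetic difference is that the paper absorbs $\gamma_0$ by the substitution $x=\gamma/\gamma_0$ before invoking the $G$-representation, whereas you keep $\gamma$ as the integration variable and place $1/\gamma_0$ inside the Meijer $G$ argument; the two are trivially equivalent.
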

\begin{proof}
Ergodic capacity with water-filling power scheme is given by~\cite[Eq.~(7)]{Goldsmith1997}
\begin{align*}
C &= \frac{1}{\ln2} \int_{\gamma_0}^{\infty}\ln\left( \gamma / \gamma_0 \right)f_{\gamma}\left(\gamma\right)d\gamma \\
  &= \frac{\gamma_0}{\ln2}   \int_{0}^{\infty} \ln (x)H(x-1) f_{\gamma} \left(\gamma_0 x \right) dx.
\end{align*}
where the last equality follows due to the $\gamma / \gamma_0 = x$ variable substitution and noting that $H(x-1)$ is the unit-step function. Following \cite[Sec. 8.4.6]{Prudnikov1998}, $\ln(x)H(x-1)$ can be expressed as $\ln(x)H(x-1)=G_{2,2}^{0,2}\left( x\big|_{0,0}^{1,1}\right)$. Now, following similar steps as used in Theorem~\ref{thm:no_csi}, we get the desired capacity result.
\end{proof}

We have not considered asymptotic capacity analysis with CSI due to space limitations. To the extent of our knowledge, the derivations~\eqref{eq:err_probclos},~\eqref{eq:nocsi_cap},~\eqref{eq:nocsi_asym} and~\eqref{eq:csi_cap} are new and compact. 
\section{Numerical Results \& Discussion}
\label{sec:ne}
In this section, we analyze the numerical results to illustrate the capacity and error performance of RIS aided mmWave communications in i.i.d. Rician fadings. Fig.~\ref{fig:asep} illustrates the effect of increasing the Rician factor on the BPSK modulation's $(p=q=1)$ error probability performance. Expectedly, as the Rician factor improves, the diversity gain increases as indicated by the increasing slope (magnitude only) of the ASEP curves.
\begin{figure}[H]
\begin{centering}
\includegraphics[scale=0.75]{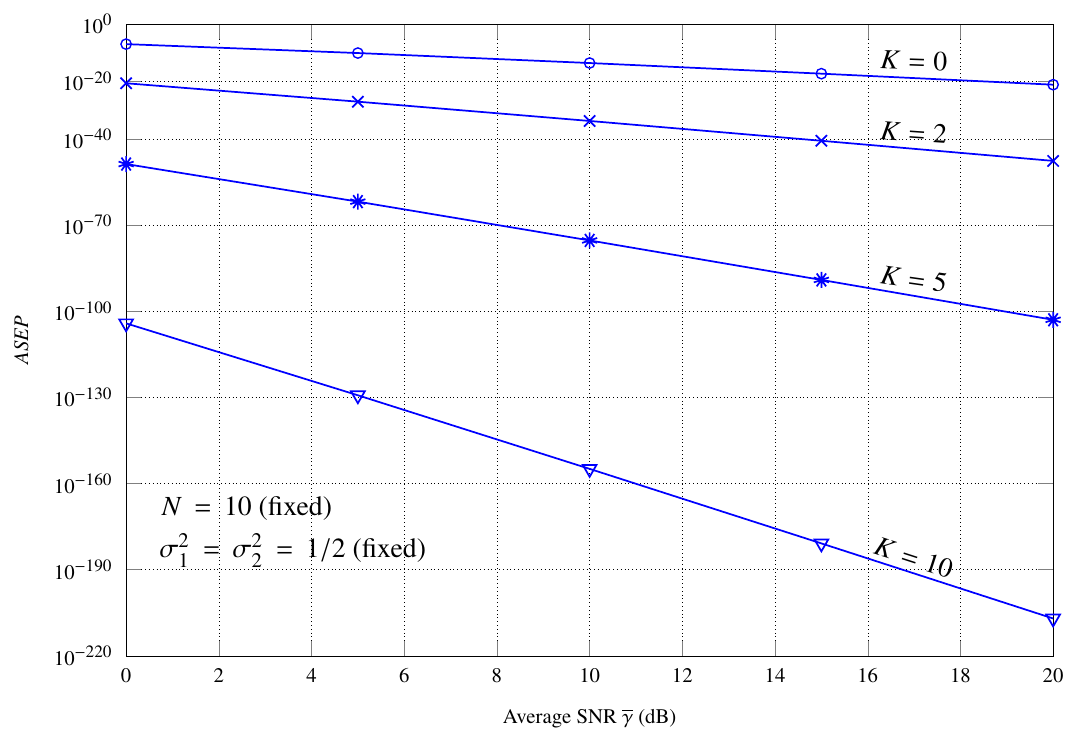}
\caption{ASEP vs. SNR for different values of $K_1=K_2=K$.}
\label{fig:asep}
\end{centering}
\end{figure}
\begin{figure}[H]
\begin{centering}
\includegraphics[scale=0.75]{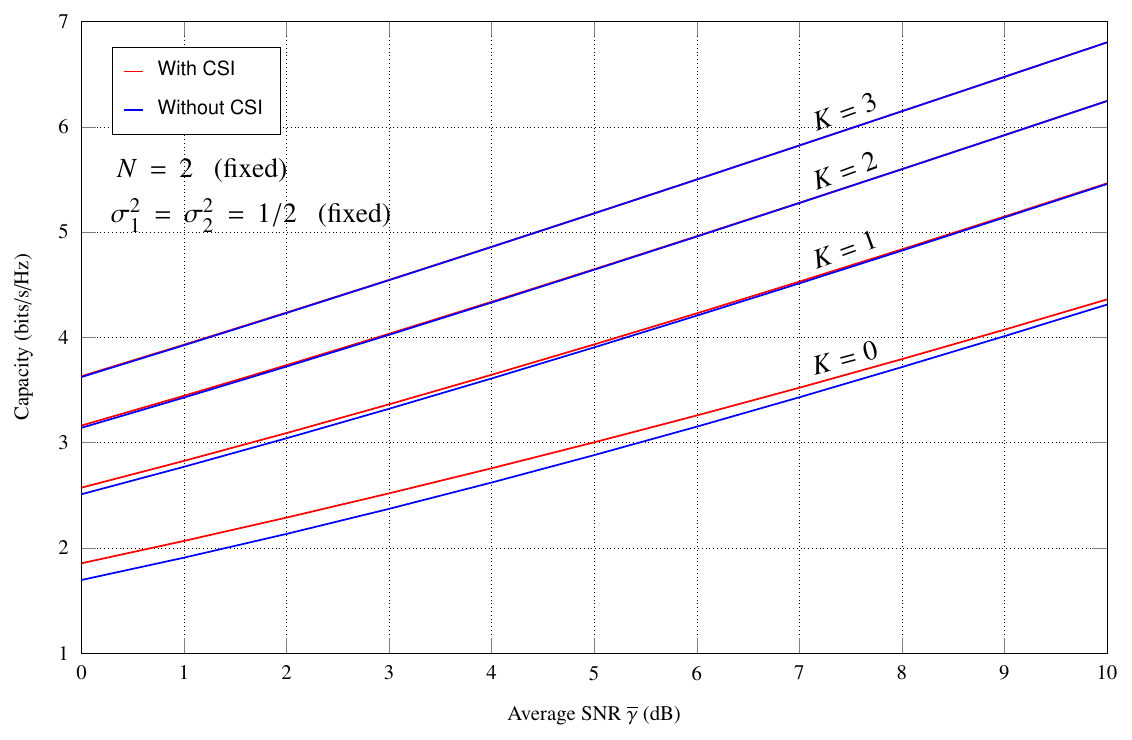}
\caption{Capacity vs. SNR for different values of $K_1=K_2=K$.}
\label{fig:element_2}
\end{centering}
\end{figure}
\begin{figure}[H]
\begin{centering}
\includegraphics[scale=0.75]{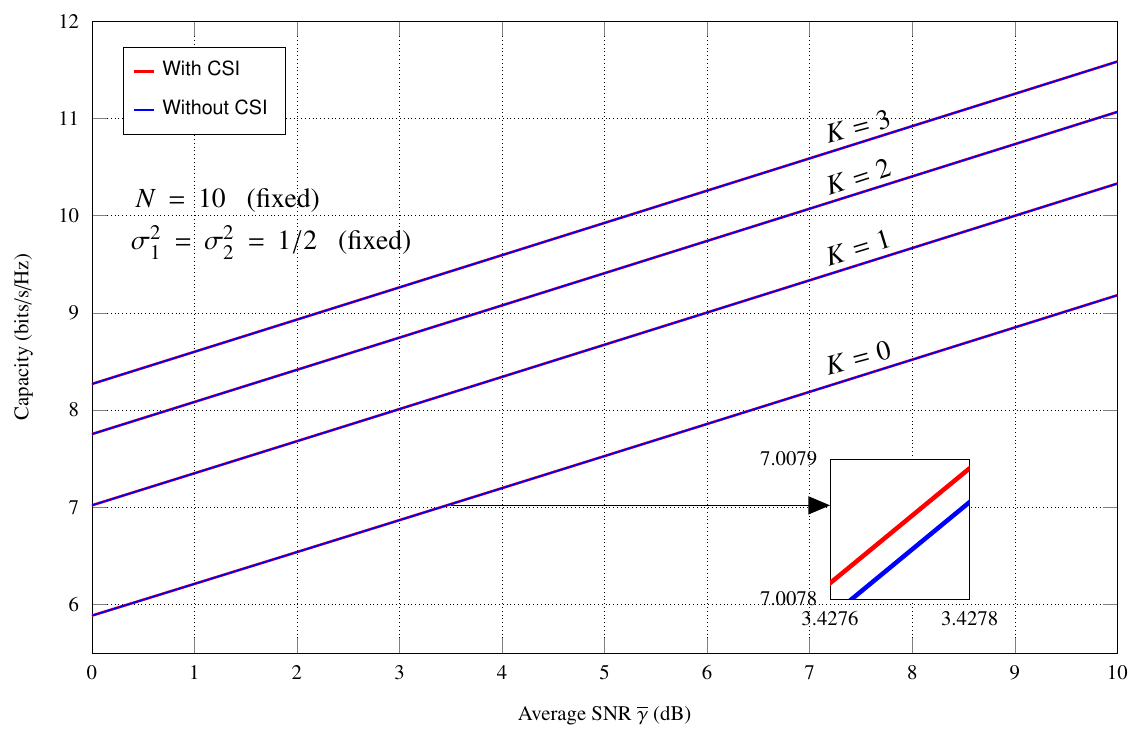}
\caption{Capacity vs. SNR for different values of $K_1=K_2=K$.}
\label{fig:element_10}
\end{centering}
\end{figure}
For the numerical results in Figs.~\ref{fig:element_2} and~\ref{fig:element_10}, we consider $\sigma_1^2=\sigma_2^2=\sfrac{1}{2}$, $K_1=K_2=K$, and $N=2$ and $N=10$ respectively. Numerical results for the RIS-assisted communication with small number of elements in Fig.~\ref{fig:element_2} show that the CSI based water filling power  allocation improves the capacity in the low-SNR regime. Fig.~\ref{fig:element_2} also shows that for low-Rician factor $K$ values, power control compensates for the large scattering in NLoS directions and improves the throughputs. However, for the higher values of $K$, transmission without power control performs as good as water-filling power allocation. The effect of increasing the reflecting elements $N$ is to compensate the scattering losses due to low Rician $K$-factor by providing higher beamforming gains as illustrated in Fig.~\ref{fig:element_10}. Further, Fig.~\ref{fig:element_10} also demonstrates that for large RIS-assisted single-antenna mmWave communication system, transmission without power control is nearly optimal and thus, CSI to the transmitter side is not necessary. We emphasize here that the number of reflecting elements on the RIS are as per the system requirements. If the system requirements are not very stringent then it is sufficient to employ low number of reflecting elements as suggested in~\cite{Salhab2021}. 

In Fig.~\ref{fig:asympt}, we illustrate the accuracy of the asymptotic capacity expression without CSI in the low-SNR regime. We find that the asymptotic capacity starts approaching the exact capacity as at reasonably low-SNRs; e.g., for the chosen settings, the convergence begins around -30 dB of average SNR. Notice that the capacity gap due to CSI and without CSI increases as $\overline{\gamma} \to 0$. Fig.~\ref{fig:asympt} also demonstrates the important role of CSI based transmit-power adaptation in the low SNR regime to enhance the achievable throughputs.
\begin{figure}[H]
\begin{centering}
\includegraphics[scale=0.78]{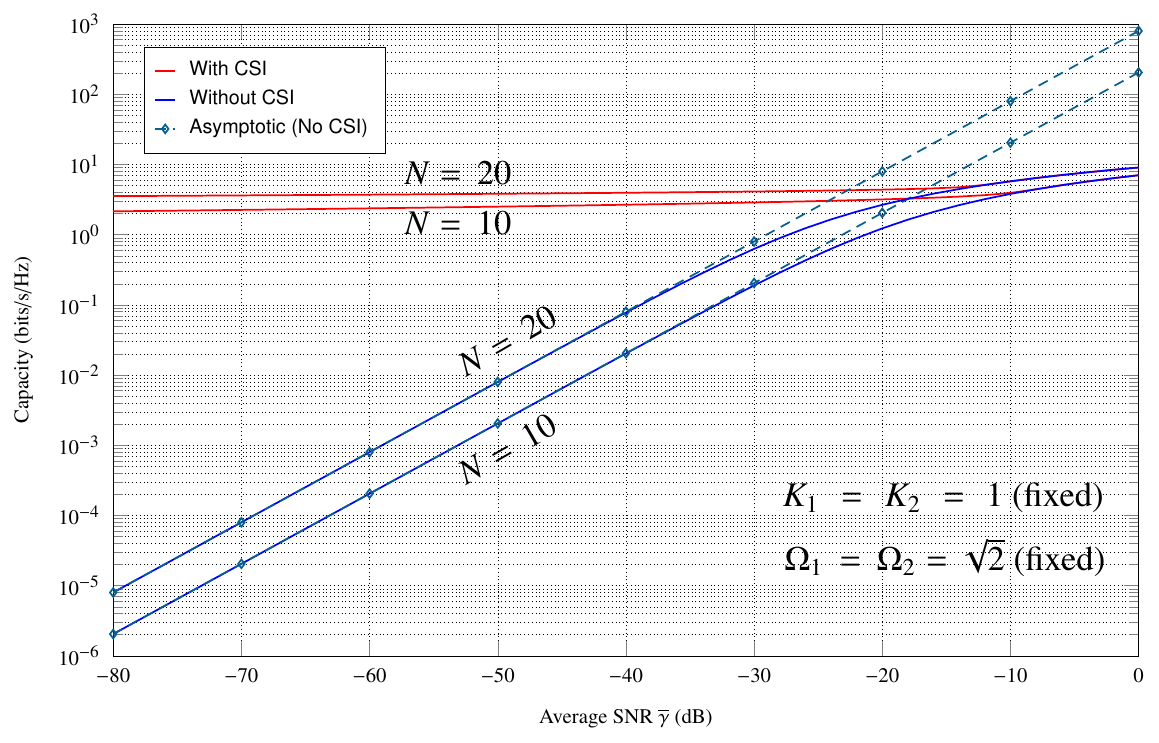}
\caption{Capacity vs. SNR in the low-SNR regime.}
\label{fig:asympt}
\end{centering}
\end{figure}
\section{Conclusion}
\label{sec:con}
In this work, we have derived a novel closed-form capacity formula for the RIS-assisted mmWave communication channel in i.i.d. Rician fadings with full CSI at the transmitter. Then, a new closed-form expression was derived for the capacity without CSI, which was further characterized in the low-SNR regime. Next, we present a new compact ASEP formula in closed-form for the generalized $M$-QAM/$M$-PSK signalling over the RIS-assisted communication. Numerical results showed that CSI based power adaptation compensates the high scattering losses in an RIS-assisted communications. Finally, we noted that for a mmWave link assisted by a large RIS, the transmission without power control is nearly optimal for a wide range of Rician fading conditions.
\section*{Acknowledgements}
\label{sec:ack}
This work was supported partly by the ``NCKU 90 and Beyond'' initiative at National Cheng Kung University (NCKU), Taiwan, partly by the Young Scholar Fellowship Program (Einstein Program) of Ministry of Science and Technology (MOST) in Taiwan under Grant MOST 110-2636-E-006-026, and partly by the Higher Education Sprout Project of the Ministry of Education (MOE) to the Headquarters of University Advancement at NCKU.
\vspace{-0.75em}

\balance


\begin{thebibliography}{30}
\bibitem{Wu2020} Q.~Wu and R.~Zhang, ``Towards smart and reconfigurable environment: Intelligent reflecting surface aided wireless network,'' \emph{IEEE Communications Magazine}, vol. 58, no. 1, pp. 106-112, January 2020. 

\bibitem{Basar2019} E.~Basar, M.~Di~Renzo, J.~De~Rosny, M.~Debbah, M.-S.~Alouini and R.~Zhang, `` Wireless communications through reconfigurable intelligent surfaces,'' \emph{IEEE Access}, vol. 7, pp. 116753-116773, 2019.

\bibitem{Lin2021} C.-H.~Lin, et al., ``Automatic inverse design of high-performance beam-steering metasurfaces via genetic-type tree optimization,'' \emph{Nano Letters}, vol. 21, 12, pp. 4981-4989, 2021.

\bibitem{Wu2019} Q.~Wu and R.~Zhang, ``Intelligent reflecting surface enhanced wireless network via joint active and passive beamforming,'' \emph{IEEE Transactions on Wireless Communications}, vol. 18, no. 11, pp. 5394-5409, 2019.

\bibitem{Boulogeorgos2020} A.~A.~Boulogeorgos and A.~Alexiou, ``Performance analysis of reconfigurable intelligent surface-assisted wireless systems and comparison with relaying,'' \emph{IEEE Access}, vol. 8, pp. 94463-94483, 2020.

\bibitem{Salhab2021} A.~M.~Salhab and M.~H.~Samuh, ``Accurate performance analysis of reconfigurable intelligent surfaces over Rician fading channels,'' in~\emph{IEEE Wireless Communications Letters}, vol. 10, no. 5, pp. 1051-1055, 2021.



\bibitem{Gao2021} Y.~Gao, J.~Xu, W.~Xu, D.~W.~K.~Ng and M.-S.~Alouini, ``Distributed IRS with statistical passive beamforming for MISO communications,'' \emph{IEEE Wireless Communications Letters}, vol. 10, no. 2, pp. 221-225, 2021.


\bibitem{Zhang2020} S.~Zhang and R.~Zhang, ``Capacity characterization for intelligent reflecting surface aided MIMO communication,'' \emph{IEEE Journal on Selected Areas in Communications}, vol. 38, no. 8, pp. 1823-1838, 2020.


\bibitem{Wu_2020} Q.~Wu and R.~Zhang, ``Beamforming optimization for wireless network aided by intelligent reflecting surface with discrete phase shifts,'' \emph{IEEE Transactions on Communications}, vol. 68, no. 3, pp. 1838-1851, 2020.

\bibitem{Donoughue2012} N.~O'Donoughue and J.~M.~F.~Moura, ``On the product of independent complex Gaussians,'' \emph{IEEE Transactions on Signal Processing}, vol. 60, no. 3, pp. 1050-1063, 2012.

%
%


\bibitem{Yildirim2021} I.~Yildirim, A.~Uyrus and E.~Basar, ``Modeling and analysis of reconfigurable intelligent surfaces for indoor and outdoor applications in future wireless networks,'' \emph{IEEE Transactions on Communications}, vol. 69, no. 2, pp. 1290-1301, 2021.


\bibitem{Han2019} Y.~Han et al., ``Large intelligent surface-assisted wireless communication exploiting statistical CSI,'' \emph{IEEE Transactions on Vehicular Technology}, vol. 68, no. 8, pp. 8238-8242, 2019.

\bibitem{Guo2020} C.~Guo, Y.~Cui, F.~Yang and L.~Ding, ``Outage probability analysis and minimization in intelligent reflecting surface-assisted MISO systems,'' \emph{IEEE Communications Letters}, vol. 24, no. 7, pp. 1563-1567, 2020.

\bibitem{Tao2020} Q.~Tao, J.~Wang and C.~Zhong, ``Performance analysis of intelligent reflecting surface aided communication systems,'' \emph{IEEE Communications Letters}, vol. 24, no. 11, pp. 2464-2468, 2020.

\bibitem{Jia2020} Y.~Jia, C.~Ye and Y.~Cui, ``Analysis and optimization of an intelligent reflecting surface-assisted system with interference,'' \emph{IEEE Transactions on Wireless Communications}, vol. 19, no. 12, pp. 8068-8082, 2020.

\bibitem{Kudathanthirige2020} D.~Kudathanthirige, D.~Gunasinghe and G.~Amarasuriya, ``Performance analysis of intelligent reflective surfaces for wireless communication,'' in~\emph{IEEE International Conference on Communications (ICC)}, 2020.

\bibitem{Yang2020} L.~Yang, et al., ``Accurate closed-form approximations to channel distributions of RIS-aided wireless systems,'' \emph{IEEE Wireless Communications Letters}, vol. 9, no. 11, pp. 1985-1989, 2020.




\bibitem{Verdu2004} S.~Verdu, ``Spectral efficiency in the wideband regime,'' \emph{IEEE Transactions on Information Theory}, vol. 48, no. 6, pp. 1319-1343, June 2004,

\bibitem{Primak2004} S.~Primak, V.~Kontorovich, and V.~Lyandres, \emph{Stochastic Methods and their Applications to Communications: Stochastic Differential Equations Approach.} West Sussex, U.K.: Wiley, 2004.

\bibitem{Goldsmith1997} A.~J.~Goldsmith and P.~P.~Varaiya, ``Capacity of fading channels with channel side information,'' \emph{IEEE Transactions on Information Theory}, vol. 43, no. 6, pp. 1986-1992, 1997.


\bibitem{Gradshteyn2000} I.~S.~Gradshteyn and I.~M.~Ryzhik, \emph{Tables of Integrals, Series and Products}, 6th ed. San Diago, CA, USA: Academic, 2000.

\bibitem{Alouini1999} M.-S.~Alouini and A.~J.~Goldsmith, ``A unified approach for calculating error rates of linearly modulated signals over generalized fading channels,'' in \emph{IEEE Trans. Commun.,} vol.~47, no.~9, pp.~1324-1334, Sept.~1999.



\bibitem{McKay2007} M.~R.~McKay, A.~J.~Grant and I.~B.~Collings, ``Performance analysis of MIMO-MRC in double-correlated Rayleigh environments,'' \emph{IEEE Transactions on Communications}, vol. 55, no. 3, pp. 497-507, 2007.

\bibitem{Prudnikov1998} A.~P.~Prudnikov,  I.~A.~Brychkov  and O.~I.~Marichev, \emph{Integrals and Series/Volume 3, More Special Functions.} New York, USA: Gordon and Breach, 1990.

\end{thebibliography}
\end{document}